
\documentclass[10pt, journal]{IEEEtran}
%


%

%
\usepackage{cite}

%
\ifCLASSINFOpdf
   \usepackage[pdftex]{graphicx}
\else
\fi
%
%

%
\usepackage{amsmath}
%

%
\usepackage{algorithmic}

%
\usepackage{array}
\usepackage{xcolor}
\usepackage[ruled,linesnumbered]{algorithm2e}
\usepackage{amsthm}
\newtheorem{property}{Property}
\newtheorem{lemma}{Lemma}
\newtheorem{theorem}{Theorem}

\usepackage{multirow}
\usepackage{booktabs}
\usepackage{balance}
\usepackage{color}
\usepackage{hyperref}

\hyphenation{op-tical net-works semi-conduc-tor}

\begin{document}
%
\title{A Switch Architecture for Time-Triggered Transmission with Best-Effort Delivery}
%
%
%

\author{Zonghui~Li,
        Wenlin~Zhu,
        Kang~G.~Shin,~\IEEEmembership{Life~Fellow,~IEEE,}
        Hai~Wan,
        Xiaoyu~Song,~\IEEEmembership{Senior~Member,~IEEE,}
        Dong~Yang,~\IEEEmembership{Senior~Member,~IEEE,}
        and~Bo~Ai,~\IEEEmembership{Fellow,~IEEE}
\IEEEcompsocitemizethanks{
\IEEEcompsocthanksitem Zonghui Li, Wenlin Zhu are with the Beijing Key Laboratory of Transportation Data Analysis and Mining, School of Computer and Information Technology, 
Beijing Jiaotong University, Beijing, China, 100044.
Zonghui Li is the corresponding author. E-mail: zonghui.lee@gmail.com, zhuwenlin@bjtu.edu.cn.
\IEEEcompsocthanksitem Kang G. Shin is with the Department of 
Electrical Engineering and Computer Science, University of Michigan, Ann Arbor, MI 48109-2121, USA. E-mail: kgshin@umich.edu. 
\IEEEcompsocthanksitem Hai Wan is with the Software School, 
Tsinghua University, Beijing, China, 100084. E-mail: 
wanhai@tsinghua.edu.cn. 
\IEEEcompsocthanksitem Xiaoyu Song is in the Department of Electrical and Computer Engineering, Portland State University, Portland, OR. E-mail: songx@pdx.edu.
\IEEEcompsocthanksitem Dong Yang, Bo Ai are with the School of Electronic and Information Engineering, Beijing Jiaotong University, Beijing, China, 100044. E-mail: dyang@bjtu.edu.cn, boai@bjtu.edu.cn.
}
}

\maketitle

\begin{abstract}
In Time-Triggered (TT) or time-sensitive networks, the 
transmission of a TT frame is required to be scheduled at 
a precise time instant for industrial distributed 
real-time control systems. Other (or {\em best-effort} (BE)) 
frames are forwarded in a BE manner. 
Under this scheduling strategy, the transmission of a TT frame must wait 
until its scheduled instant even if it could have been transmitted sooner.
On the other hand, BE frames are transmitted whenever possible 
but may miss deadlines or may even be dropped due to congestion. 
As a result, TT transmission and BE delivery are
incompatible with each other.

To remedy this incompatibility, we propose a synergistic 
switch architecture (SWA) for TT transmission with BE delivery to 
dynamically improve the end-to-end (e2e) latency of TT frames 
by opportunistically exploiting BE delivery. 
Given a TT frame, the SWA generates and transmits
a cloned copy with BE delivery. The first frame arriving at 
the receiver device is delivered with a configured 
jitter and the other copy ignored. So, the SWA
achieves shorter latency and controllable jitter,
the best of both worlds. 
We have implemented SWA using FPGAs in an
industry-strength TT switches and used four test scenarios to 
demonstrate SWA's improvements of e2e latency and controllable 
jitter over the state-of-the-art TT transmission scheme.
\end{abstract}

\begin{IEEEkeywords}
synergistic switch architecture (SWA), dynamic latency improvement, controllable jitter, time-sensitive networking, industrial real-time control, FPGAs
\end{IEEEkeywords}

%
\IEEEpeerreviewmaketitle

\section{Introduction}
%
%
%
%
\IEEEPARstart{I}{n} the past decade, Ethernet has been increasingly used/deployed 
in industrial distributed real-time control systems \cite{Vitturi2019Industrial, Gavrilut2022Constructive}.
However, the conventional Ethernet delivers frames in a best-effort (BE) manner 
without accounting for time-critical properties for industrial control tasks. 
Industrial Ethernet \cite{LoBello2019Perspective} is a set of new solutions that 
integrate industrial control networks and standard Ethernet. 
The time-triggered (TT) communication paradigm \cite{Bruckner2019Introduction} is 
a cost-efficient solution for Industrial Ethernet. Time-critical frames for 
industrial control functions, called {\em TT frames}, are statically scheduled 
for transmission at precise time instants\cite{Minaeva2021Survey}. Meanwhile, the other (called {\em BE})
frames are forwarded with BE delivery of the standard Ethernet. We call such 
networks with a TT communication paradigm {\em TT networks} \cite{Li2019enhanced}.

Two typical networks for Industrial Ethernet fall into the category of TT networks. 
One is {\em TTEthernet}, standardized as AS6802 \cite{saeAS6802} targeting the 
aerospace domain by the Society of Automotive Engineers (SAE) International Group. 
It defines a fault-tolerant synchronization strategy to build 
and maintain synchronized time in a distributed system 
including terminal systems and switches. 
TT schedulers \cite{SteinerSMTScheduler, pozoSMTScheduler15p, 
pozoSMTScheduler15, craciunas2016combined} are used to 
statically schedule TT frames according to the 
application-specific requirements for the worst-case e2e latency. 
The other is {\em Time-Sensitive Networking} (TSN) \cite{Finn2018Introduction}, 
defined and refined by the IEEE 802.1 TSN Task Group as an extension of the 
IEEE 802.3 Ethernet for real-time transmission since 2012. It employs IEEE 802.1AS 
\cite{IEEE8021AS} (based on IEEE 1588 \cite{IEEE1588}, a precision clock synchronization 
protocol) to synchronize all devices participating in real-time communication. 
Traffic is labeled with different priorities and scheduled to meet different 
time-critical requirements in the mixed transmission, which is standardized as 
IEEE 802.1Qbv \cite{IEEE8021Qbv}. 802.1Qbv uses time-aware transmission gates to 
separate transmission queues for different traffic classes. The transmission gates 
are opened and closed at specific times according to a time-based circular scheduler. 
These TT schedulers \cite{SteinerSMTScheduler, pozoSMTScheduler15p, pozoSMTScheduler15, 
craciunas2016combined} are compatible with the time-aware transmission and have 
already been applied to TSN in \cite{CraciunasTSNScheduling, SernaOliver2018IEEE, Steiner2018Traffic}.

However, scheduling TT frames by assigning them precise transmission time instants 
is a bin-packing problem, which is known to be NP-complete. 
Thus, TT schedulers \cite{SteinerSMTScheduler, pozoSMTScheduler15p, pozoSMTScheduler15, 
craciunas2016combined, CraciunasTSNScheduling, Steiner2018Traffic, SernaOliver2018IEEE} 
usually search for application-specific worst-case e2e latency constraints as 
deadlines so as to reduce the time to solve and enhance schedulability. 
Furthermore, under this scheduling policy, the TT frames cannot be sent sooner 
than their scheduled sending time even when the network is idle.
For example, given a frame of 64 bytes under fast Ethernet (100 Mbps), the 
application-specified e2e latency 200$\mu s$, and the path from A to F illustrated 
in Fig.~\ref{fig:TTNetworkExample}, a feasible solution for the scheduled sending 
times may be 0$ns$ for end-device A, 80,000$ns$ for switch D, and 
180,000$ns$ for switch E. 
This solution meets the latency requirement but needs about $\approx$180$\mu s$ 
to deliver the frame. In fact, using BE delivery, the forwarding latency for 
the frame, defined as the latency from the first bit received by the switch to 
the first bit sent out by the switch, is 7.92$\mu s$ in the absence of congestion 
in our fast Ethernet switch. Therefore, the e2e latency is about 20$\mu s$ 
from A to F without congestion. 
The latency is 10x shorter than that of TT transmission. 
As a result, TT frames are delivered much slower than non-real-time BE frames! 
That is, there exists a big latency gap between TT transmission and BE transmission, 
making us wonder whether it is good enough for time-critical applications 
to only guarantee their deadlines by the TT transmission mode.

In the context of Industry 4.0 \cite{Wollschlaeger2017Future, LoBello2019Perspective}, 
the timely delivery of control frames by industrial networks is vital
to different control steps in industrial distributed systems. 
The faster the frames are delivered, the closer these control steps are in order 
to achieve shorter response time and higher efficiency. 
For example, fast delivery for emergency braking in high-speed trains can gain more time 
to avoid accidents. Another example is healthcare monitoring, especially for life-critical 
heart condition. Besides meeting deadlines, fast delivery of such monitoring messages 
will enable a more timely and efficient response to health problems. 
So, networks need to reduce delay beyond what e2e deadlines require. 
However, TT transmission satisfies the e2e deadlines of time-critical frames 
by scheduling their precise sending instants in switches that could hamper their fast delivery. 
On the other hand, BE frames are forwarded using strategies like strict priority (SP), 
weighted round robin priority (WRR), credit-based traffic shaping (CBS)\cite{IEEE8021Qav, Zhao2018Timing, zhao2021Latency} and asynchronous traffic shaping (ATS)\cite{IEEE8021Qcr, zhao2022Quantitative}, so as to ensure 
their delivery as soon as possible. Nonetheless, such transmission strategies do not 
guarantee the e2e deadlines and may even lead to their loss due to queuing, rerouting, congestion, etc. 
So, the BE transmission is unacceptable for industrial control applications, and  
moreover, TT transmission and BE delivery are incompatible to each other.

To exploit both BE and TT transmissions, we propose a synergistic 
switch architecture (SWA) to forward TT frames as soon as possible 
--- without waiting until their scheduled sending times --- and 
meet their e2e latency constraints; our preliminary results were 
reported in \cite{Li2019DOE}. The architecture forwards the cloned 
TT frames with BE delivery to speed up the transmission of TT frames. 
That is, a TT frame and its copy are forwarded with TT and BE 
strategies, respectively. Whichever of the two copies arrives at 
the end-device first is delivered to the receiver application
and the other copy is discarded. 
As a result, the e2e latency of the delivered copy is likely to be 
shorter than that of the TT frame. Furthermore, to handle the possible 
loss of frame copies, cloning happens at each switch along the routing 
path of a TT frame. By comparing the strictly-increasing sequence number
of TT frames, the SWA guarantees the transmission of only one 
cloned copy of each TT frame. Therefore, the bandwidth cost for 
the BE transmission of cloned copies is equal to the cost 
bandwidth of TT frames.

\begin{figure}[htbp]
  \centering
  \includegraphics[width=3.2in]{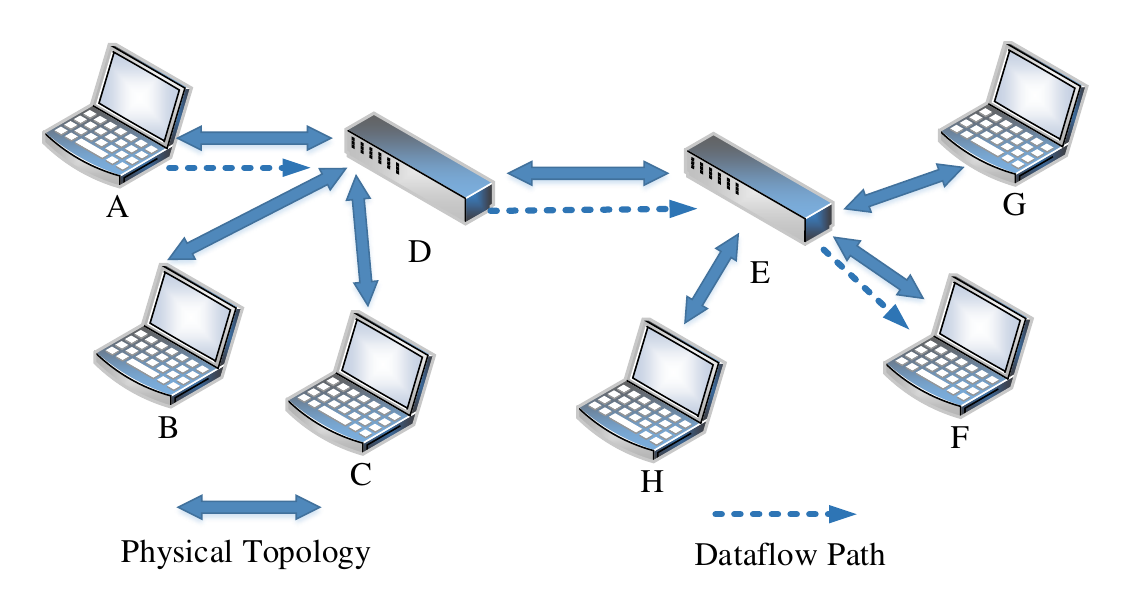}
  \caption{Example of a TT network with 6 terminal systems and 2 switches.}
  \label{fig:TTNetworkExample}
\end{figure}

In fact, the SWA only uses the unused bandwidth of 
BE transmission because it opportunistically delivers these copies. 
The congestion, if happens, lets the copies be dropped. 
Once the copies have an opportunity to be transmitted, they will 
be delivered again. Such an opportunism allows the SWA to 
dynamically improve the e2e latency of TT frames and adapts to 
the changes of the available BE bandwidth. 
Moreover, the bandwidth cost of TT frames for industrial control
systems is usually low. For example, the multifunction 
vehicle bus (MVB) \cite{IEC613751} for a train communication network 
has a bandwidth of up to 1.5 Mbps. The controller area network (CAN) 
\cite{ISO11898CAN} for the automotive industry has a bandwidth of up 
to 1 Mbps. Consequently, the copies only consume a small portion of 
bandwidth, making SWA cost-efficient. 
In addition, according to statistics \cite{murray2012state}, 
in Ethernet, packets using Type of Service and Differentiated 
Service Code Point account for 18\% of the enterprise network traffic. 
So, the copies set as high priority traffic will always have a high 
probability to improve the transmission of TT frames with BE 
delivery. Finally, to control the {\em jitter} (defined as the 
difference of the maximum and minimum latency) of a delivered frame,
we extend the preliminary architecture in \cite{Li2019DOE} to support 
a configured jitter by holding the frame until the jitter is less 
than a configured range.

We have implemented the proposed architecture in our industrial TT 
switches with Xilinx FPGA Virtex-7 XC7VX485T. Four test scenarios 
are presented to demonstrate the improvement and controllable jitter of 
the e2e latency of TT frames against the state-of-the-art TT 
transmission \cite{SteinerSMTScheduler, Li2020Time}. 
This paper makes the following main contributions:
\begin{itemize}
\item Proposal of a synergistic switch architecture (SWA) and 
its five forwarding steps as well as their respective algorithms 
covering the entire forwarding process of the copies of TT frames 
from ingress into a switch to egress from the switch.
\item Four properties of the SWA that dynamically improve the e2e latency 
of TT frames by opportunistically exploiting BE transmission.
\item Extension of the architecture to support controllable jitters by 
holding those delivered frames until the jitters fall in their 
configured ranges.
\item Implementation of the SWA in our FPGA-based industrial TT switches 
and demonstration of four test scenarios to validate its dynamical 
latency improvement and controllable jitter for TT frames compared 
to the current TT transmission.
\end{itemize}

The rest of the paper is organized as follows. Section II discusses 
related work and describes the SWA's novelty. Section III describes 
the background for TT transmission and the problems of designing such a SWA. 
Section IV details the design, algorithms and properties of the SWA. 
Section V evaluates the switch architecture for four test scenarios
and compares it with the current TT transmission. 
Finally, Section VI concludes the paper.

\section{Related Work}

Although time-triggered networks integrate TT and BE transmissions,
the two types of transmission are still independent in the switch 
architecture for transmission of their respective frames. 
802.1Qbv \cite{IEEE8021Qbv} employs a gate control list at the 
scheduled sending time instants to separate TT frames from 
BE frames, and uses guard-band or preemption strategies 
\cite{IEEE8021Qbu} to reduce the effect of BE frames on TT frames. 
\cite{Li2020Time, Yan2020TSNBuilder, Vlk2020Enhancing} follow 
the switching scheme of 802.1Qbv. \cite{Yan2020TSNBuilder} 
proposes a template-based TSN-builder to customize TSN switches 
according to an application-dependent resource abstract. \cite{Vlk2020Enhancing} co-designs FIFO scheduling constraints 
with hardware sequence checking of TT frames to improve usage 
rate of queues. Furthermore, \cite{Li2020Time} presents a 
memory-switch architecture and shared-memory scheduling 
constraints to make all ports share on-chip memory. 
To the best of our knowledge, the proposed SWA is 
the first cooperative TT and BE transmissions. It uses BE 
transmission to collaboratively deliver clone copies of TT frames 
to improve the e2e latency of TT transmission. It is different 
from the multipath redundancy of 802.1Qca \cite{IEEE8021Qca} which 
uses multiple redundant paths for TT frames and all 
paths are used to transmit TT frames with TT transmission. 
\cite{Su2019Synthesizing} presents a multipath redundant 
scheduler for TT transmission.

TT transmission depends on TT scheduling \cite{SteinerSMTScheduler, 
pozoSMTScheduler15p, pozoSMTScheduler15, craciunas2016combined, 
CraciunasTSNScheduling, Steiner2018Traffic, SernaOliver2018IEEE} 
to plan the precise sending instants of TT frames in each 
of end-devices and switches according to the 
application-specific e2e latency requirements, especially for 
industrial control. Scheduling results are transformed to 
schedule tables as the configuration of switches. Based on 
schedule tables, switches send TT frames at their precise sending 
instants for real-time control. \cite{SteinerSMTScheduler} first
formalizes TT scheduling by linearizing constraints such as 
e2e delay, free contention, and limited buffers, and then uses 
a satisfiability-modulo-theories (SMT) solver \cite{deSMT} to 
iteratively solve the scheduling problem.
\cite{pozoSMTScheduler15p} tunes the configuration parameters of 
SMT solvers for enhancing performance. \cite{pozoSMTScheduler15, 
craciunas2016combined} decomposes the scheduling constraints 
into multiple subsets and solves them incrementally to reduce 
the solution time. Especially, \cite{CraciunasTSNScheduling, 
Steiner2018Traffic, SernaOliver2018IEEE} linearize the constraints 
of 802.1Qbv. \cite{SernaOliver2018IEEE} uses the first-order 
theory of arrays to directly generate the gate control list of 
802.1Qbv. Recently, heuristic strategies \cite{NayakTSN, 
Yu2018Fast, Nayak2018Incremental, Wang2019Adaptive, Yu2020Adaptive, Falk2020Time, Jia2021TTDeep} have been proposed to speed up TT scheduling. 
In addition, BE frames are transmitted in the interval between 
TT frames. To improve the quality of service (QoS) of BE 
transmission, \cite{Tamas-Selicean2012, Tamas-Selicean2014} use 
a Tabu Search-based meta-heuristic algorithm to adjust the 
interval between TT frames. Such an interval adjustment even worsens 
the e2e latency of TT frames. All these schedulers generate the 
precise sending instants for TT frames in each device. They 
prevent the transmission of TT frames as soon as possible since 
they must wait until their time instants.

Our proposed SWA can in theory improve any 
scheduler for TT transmission, namely, both previous  
and future TT schedulers that modify the 
previous schedulers to minimize the e2e latency of TT frames, 
since it opportunistically exploits BE transmission to deliver 
TT frames as soon as possible. In other words, SWA 
can improve the e2e latency of TT frames as long as 
BE transmission is possible.

\section{Background and Problem Definition}
This section first presents the background for TT transmission, 
and then details the problems in designing a synergistic 
switch architecture.

\subsection{Background}
\subsubsection{Basic terminology and Concepts}

We model the topology of a network as an undirected graph 
\textit{G(V,E)}, where vertices \textit{V} represent the end-systems 
and switches and edges \textit{E} represent the physical communication 
links connecting vertices. Fig.~\ref{fig:TTNetworkExample} shows an 
example network topology with 8 vertices including 6 end-systems and 
2 switches. An ordered tuple $[v_{i}, v_{j}], v_{i}, v_{j} \in V$ 
defines a directed ``dataflow links" from $v_{i}$ to $v_{j}$. 
A sequence of dataflow links $l_{i}$ forms a ``dataflow path". 
An example of a dataflow path from A to F is depicted by the dotted 
line in Fig.~\ref{fig:TTNetworkExample}. We formally express a 
dataflow path $p$ from a sender $v_{0}$ to a receiver ${v_{n+1}}$ 
by the sequence of its dataflow links:
\begin{displaymath}
p = [[v_{0}, v_{1}],\ldots,[v_{n}, v_{n + 1}]],
\end{displaymath}
where the dataflow path has $n$ switches (i.e.,
$v_{1},v_{2},\ldots,v_{n}$). Thus, a dataflow path defines
a route from a sender to exactly one receiver. 

Information between the sender and the receiver is communicated in 
the form of TT flows that are composed of periodic TT frames 
according to AS6802. Let $F$ denote the set of TT flows. 
A flow $f_{i}\in F$ on a dataflow link $[v_{k},v_{l}], f_{i}^{[v_{k}, 
v_{l}]}$ is temporally specified by the following quadruple:
\begin{displaymath}
f_{i}^{[v_{k},v_{l}]} = \{f_{i}.period, f_{i}^{[v_{k}, v_{l}]}.offset, 
f_{i}.length, f_{i}.sequence\}.
\end{displaymath}
The period and length of a flow are specified by the underlying 
application. The flow sequence identifies different TT frames of 
the same flow in different periods. $f_{i}^{[v_{k},v_{l}]}.offset$ 
is the departure time of flow $f_{i}$ from vertex $v_{k}$ to vertex 
$v_{l}$ and is assigned by TT schedulers.

\subsubsection{Time-triggered transmission} 
Scheduling results are transformed into schedule tables stored 
in devices. These devices send flows at the specific times according 
to schedule tables. For example, the $n$-th departure time of 
flow $f_{i}$ from vertex $v_{k}$ to $v_{l}$ is specified by 
$n*f_{i}^{[v_{k},v_{l}]}.period + f_{i}^{[v_{k},v_{l}]}.offset$. 
However, besides TT frames, TT networks also transmit BE frames.
Two typical methods have been used to avoid the conflict with BE frames. 
One is non-preemptive with a guard band in front of each TT frame 
transmission (according to the IEEE 802.1 Qbv-2015, Amendment 25: 
Enhancements for Scheduled Traffic), which assures the BE transmission can be done 
before transmitting TT frames. The other is preemptive 
with minimal guard band (according to the IEEE 802.1 Qbu-2016, 
Amendment 26: Frame Preemption), which minimizes the waiting time of 
TT frames in case of conflict. So, in TT networks, we assume all 
flows are sent with no waiting at the time of their departure.

Actually, given no-wait transmission, the processing delay from 
the departure time $f_{i}^{[v_{k},v_{l}]}.offset$ to the first bit 
of flow $f_{i}$ on the dataflow link $[v_{k},v_{l}]$ is constant, 
denoted by $pdelay^{[v_{k},v_{l}]}$. We also let $ldelay^{[v_{k},v_{l}]}$
be the link delay on $[v_{k},v_{l}]$, which is measured 
dynamically by the peer delay mechanism of the IEEE 1588. 
Hence, the time of arrival at vertex $v_{l}$ for flow $f_{i}$ is:
\begin{displaymath}
f_{i}^{[v_{k}, v_{l}]}.arrival = f_{i}^{[v_{k}, v_{l}]}.offset + 
pdelay^{[v_{k},v_{l}]} + ldelay^{[v_{k},v_{l}]}.
\end{displaymath}
However, TT networks depend on time synchronization whose 
accuracy, denoted by $\mu$, is defined as the maximum time 
difference of any two synchronized devices in the network. 
In theory, the jitter of the arrival time of flow $f_{i}$ is 
in the range $[f_{i}^{[v_{k}, v_{l}]}.arrival - \mu, 
f_{i}^{[v_{k}, v_{l}]}.arrival + \mu]$. In practice, depending on 
the implementation, the jitter is affected by the processing 
jitter, queuing policies, etc. Hence, the time of flow $f_{i}$ 
arriving at switch $v_{l}$ is in a time window denoted by the 
range $[f_{i}^{[v_{k}, v_{l}]}.\textit{arrival-start}, 
f_{i}^{[v_{k}, v_{l}]}.\textit{arrival-end}]$. 
Assuming the first dataflow link and the last dataflow link of 
flow $f_{i}$ are $[v_{0}, v_{1}]$ and $[v_{n}, v_{n+1}]$, 
respectively, the e2e latency of flow $f_{i}$ ranges from 
$f_{i}^{[v_{n}, v_{n+1}]}.\textit{arrival-start} - f_{i}^{[v_{0}, 
v_{1}]}.offset$ to $f_{i}^{[v_{n}, v_{n+1}]}.\textit{arrival-end} 
- f_{i}^{[v_{0}, v_{1}]}.offset$. 
The latency jitter is $f_{i}^{[v_{n}, v_{n+1}]}.\textit{arrival-end} 
- f_{i}^{[v_{n}, v_{n+1}]}.\textit{arrival-start}$.

\subsection{Problems of Designing A SWA}
TT transmission tends to hold frames until their scheduled sending 
time even when they can be transmitted right away.
In contrast, BE transmission delivers frames 
as soon as possible but its uncertainties in rerouting, 
congestion, and queuing do not ensure the satisfaction of e2e 
frame latency requirement. So, designing a SWA 
must address the uncertainty of BE transmission to improve 
TT transmission.
First, TT transmission is order-preserving due to the scheduled 
TT frames. However, rerouting may lead to 
out-of-order frames in BE transmission while it is easy to 
preserve delivery order with static routing. SWA keeps copies of TT 
frames along with the same paths as those of TT frames with static 
routes. Frame loss can also lead to out-of-order delivery if
the frame copy of a TT flow at the $i$-th period is lost 
due to congestion but the copy at 
the $(i+1)$-th period over-takes the $i$-th TT frame. 
SWA employs a sequence-based order-preserving 
strategy to drop such out-of-order copies and restore 
the right sequence.

Second, TT transmission guarantees the bounded e2e latency 
by transmitting TT frames at their scheduled precise sending 
instants. However, in BE transmission, queuing will delay the 
forwarding of copies of TT frames. As a result, a longer latency 
may incur to TT frames. Furthermore, frame 
loss will result in unreachable copies. SWA ensures 
that the latency of TT frames is the worst-case e2e 
latency because either a TT frame or its copy, whichever arrives first 
at the destination device, will be kept while discarding the other. 
Moreover, SWA allows frame loss. For example, a switch has frame loss 
due to congestion but the remaining path from its next switch to 
the destination is available. The proposed SWA can recover the copy from 
the next switch to continue improving the remaining latency of TT frames.

Finally, TT transmission guarantees the latency jitter to be in 
the time window determined by the time synchronized accuracy, the 
processing jitter of a switch, etc. However, in BE transmission, 
when a frame is forwarded without congestion in all devices along 
its path, the frame has the minimum e2e latency. 
As a result, SWA reduces the lower bound of the e2e 
latency to the minimum. The jitter of a TT frame is the 
difference between the minimum latency of its copy and the latency 
of the TT frame with the worst-case e2e latency in SWA. 
To combat the negative effects of the jitter, we 
extend the architecture to support a configurable jitter. 
The extended architecture makes a tradeoff between the jitter and the 
latency according to the application requirements. 
That is, the smaller the jitter, the higher the lower bound 
of the latency.

\begin{figure*}[htbp]
  \centering
  \includegraphics[width=7.0in]{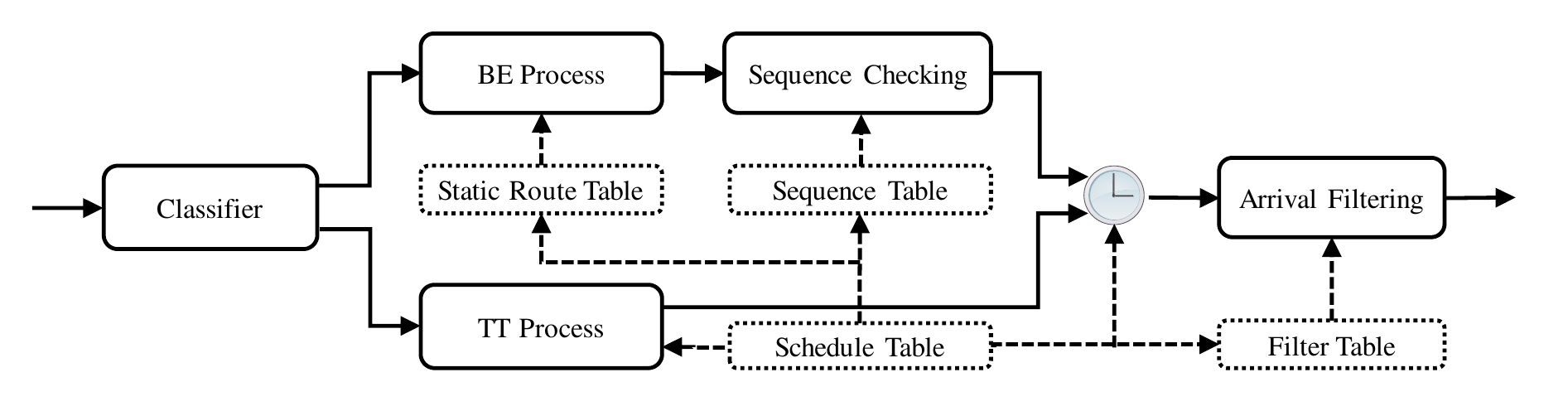}
  \caption{The synergistic switch architecture. Each solid 
  rectangle means one processing step for the architecture. 
  Each dotted rectangle means one table used by the corresponding 
  processing step. The solid rays indicate data flows (frames). 
  The dotted rays indicate control flows (table data).}
  \label{fig:best-tt-protocol-arch}
\end{figure*}

\section{Synergistic Switch Architecture}
The SWA uses BE transmission to enhance the performance of TT transmission. 
It defines the forwarding paradigm from ingress into a switch to egress 
from the switch for TT frame copies, as illustrated in 
Fig.~\ref{fig:best-tt-protocol-arch}. The solid rays indicate 
data flows (frames) and the dotted rays indicate control flows 
(table data). The forwarding paradigm contains four tables 
with dotted rectangles and five processing steps with solid 
rectangles, which is different from the standard forwarding 
scheme \cite{Li2020Time} that does not handle the uncertainty 
of BE transmission.

\begin{figure}[htbp]
  \centering
  \includegraphics[width=\linewidth]{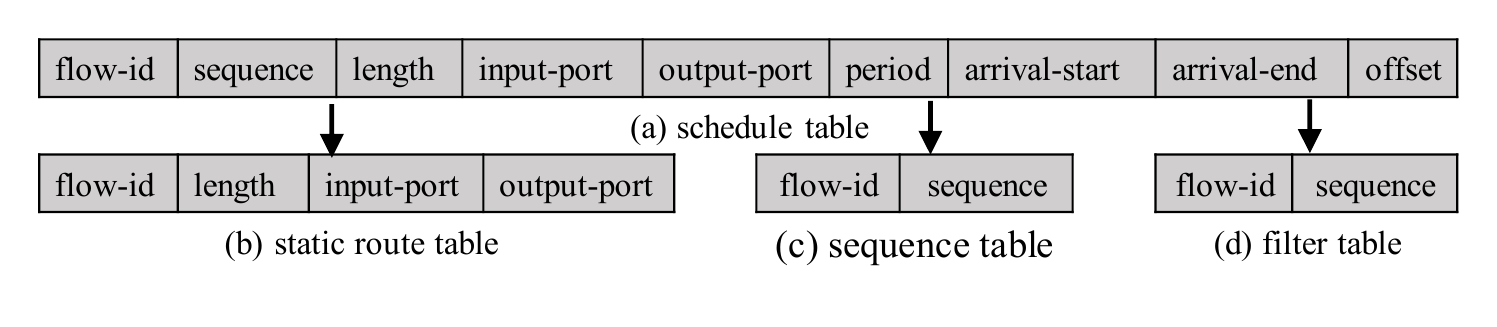}
  \caption{Tables in the SWA. 
  (a) is the content of the entire schedule table. (b), (c) and 
  (d) are \textit{static-route-table}, \textit{sequence-table} 
  and \textit{filter-table}, respectively, and their 
  initialization sources from \textit{schedule-table}.}
  \label{fig:tables}
\end{figure}

\subsection{Forwarding Tables}

TT flows are scheduled to have the precise departure times and 
the scheduled results are transformed into schedule tables 
in devices. The content of the schedule table 
\cite{Li2020Time} is illustrated as Fig.~\ref{fig:tables}(a). 
\textit{flow-id} and \textit{sequence} identify a unique flow and 
a unique TT frame of the flow, respectively. 
The \textit{sequence} also indicates the order of TT frames 
appearing in a flow. We assume that the sequence number 
is predictable. That is, given the current sequence number, the 
expected next sequence number can always be calculated. 
Furthermore, for simplicity, the expected next is equal to the 
current plus one. \textit{sequence} is updated 
based on the forwarding processes of the proposed architecture. 
\textit{period} is the time interval of a regularly repeating 
flow. For example, if the period of a flow is 1 $ms$, its frame 
will be transmitted every 1 $ms$. \textit{length} is the valid 
payload as bytes of a flow. The valid payload does not contain 
the cyclic redundancy check (4 bytes), preamble and start frame 
delimiter (8 bytes), and the shortest inter-frame gap (12 bytes). 
\textit{input-port} and \textit{output-port} are the input port 
and the output port, respectively, in a switch according to the 
dataflow path of a flow. \textit{arrival-start} and 
\textit{arrival-end} is the time window of the flow arriving at 
a switch. \textit{offset} is the departure time of a flow 
from a switch.

Copies of TT frames are transmitted using BE strategies. 
To ensure that copies follow the same dataflow paths as those of 
TT frames, they are forwarded according to a static route table which is 
illustrated in Fig.~\ref{fig:tables}(b). \textit{flow-id} is 
treated as the index to access the static route table. 
\textit{length} and \textit{input-port} are used to check a 
copy frame. Any discordance will lead to drop the copy frame. 
\textit{output-port} is the forwarding port of the copy. To 
preserve the order of copies in a flow, the sequence numbers 
of copies are checked with a sequence table illustrated in 
Fig.~\ref{fig:tables}(c). Finally, to make sure that only one 
frame, either a TT frame or its copy, is received by the destination
device, we drop the later frame and deliver the first arrival 
frame by a filter table as illustrated in 
Fig.~\ref{fig:tables}(d), in the output port of the last switch 
of a dataflow path. Fig.~\ref{fig:tables} presents the contents 
of all tables. (b), (c), and (d) are newly added for the proposed 
architecture and their initialization sources from (a). 
The updates of data fields in these tables are based on the 
forwarding processes of the proposed architecture.

\subsection{Forwarding Processes}

Fig.~\ref{fig:best-tt-protocol-arch} shows the overall 
proposed switch architecture. Following the dataflow in the 
architecture, five processing steps are performed to handle 
TT frames and their copies. The \textit{Frame} is the basic data 
structure for the architecture, which at least contains these 
data fields, namely \textit{flow-id}, \textit{sequence}, 
\textit{length}, \textit{input-port}, \textit{arrival-time} and 
\textit{iscopy}. \textit{arrival-time} is recorded by timestamps 
when the frame arrivals at a device. \textit{iscopy} equal to 
true indicates a cloned copy, else a TT frame.

\begin{algorithm}[t]
\KwIn{FIFO$<$Frame$>$ frames}
\KwOut{FIFO$<$Frame$>$ TTs, TTcopies}
     \While{!frames.empty()}{
         Frame frame = frames.dequeue();\\
         \If{frame.iscopy} {
             TTcopies.enqueue(frame);
         } \Else {
             Frame newframe = copy(frame);\\
             newframe.iscopy = true;\\
             TTs.enqueue(frame);\\
             TTcopies.enqueue(newframe);
         }
     }
\caption{Classifier}
\label{alg:classifier}
\end{algorithm}

\begin{algorithm}[t]
\KwIn{FIFO$<$Frame$>$ TTcopies}
\KwOut{FIFO$<$Frame$>$[] TTcopiesbyport}
\While{!TTcopies.empty()}{
     Frame frame = TTcopies.dequeue(TTcopies); \\
     Row row = static-route-table[frame.flow-id]; \\
     \If{(row.length == frame.length) and (row.input-port == frame.input-port)}{
        TTcopiesbyport[row.output-port].enqueue(frame);
     } \Else {
        drop frame;
     }
   }
\caption{BE process}
\label{alg:beprocess}
\end{algorithm}

\textbf{Step 1: Classifier} is illustrated in 
Alg.~\ref{alg:classifier}. It classifies the received frames by 
the data field \textit{iscopy}. If a frame is a copy, then
proceed to Step 2 \textit{BE process}. If a frame 
is a TT frame, the frame is first cloned to generate a new copy 
and then move to Step 3, \textit{TT process}. Its copy procceds
to Step 2 \textit{BE process}. The cloning happens 
at each switch.

\textbf{Step 2: BE process} is illustrated in 
Alg.~\ref{alg:beprocess}. It searches \textit{static-route-table} 
by \textit{flow-id} and then checks \textit{length} and 
\textit{input-port}. If matched, the frame is forwarded to the 
corresponding output port based on \textit{output-port}. 
Otherwise, the frame is dropped.

\begin{algorithm}[t]
\SetKwFunction{SetTimer}{SetTimer}
\KwIn{FIFO$<$Frame$>$ TTs}
\KwOut{FIFO$<$Frame$>$[] framesbyport}
\While{!TTs.empty()}{
     Frame frame = TTs.dequeue(); \\
     Row row = schedule-table[frame.flow-id];\\
     \If{(frame.arrival-time $\leq$ row.arrival-end) and (frame.arrival-time $\geq$ row.arrival-start) and (row.length == frame.length) and (row.input-port == frame.input-port) and (frame.sequence $>$ row.sequence)}{
         \SetTimer{row.offset} \{ \\
         row.sequence = frame.sequence;\\
         framesbyport[row.output-port].enqueue(frame);\\
         Row strow = sequence-table[frame.flow-id];\\
         \If{frame.sequence $>$ strow.sequence}{
             strow.sequence = frame.sequence;
          }
          \};
     } \Else {
         drop frame;
     }
   }
\caption{TT process}
\label{alg:ttprocess}
\end{algorithm}

\textbf{Step 3: TT process} is illustrated in 
Alg.~\ref{alg:ttprocess}. It searches \textit{schedule-table} by 
\textit{flow-id}. The arrival time, \textit{arrival-time}, of 
the frame must range from \textit{arrival-start} to 
\textit{arrival-end}. The sequence number, \textit{sequence}, of 
the frame must be larger than that in \textit{schedule-table}, 
indicating a new TT frame. \textit{length} and 
\textit{input-port} are also checked. If any unmatched, the frame 
is dropped. If all matched, a timer is set to $\textit{offset} 
+ m * period$ for the $m$-th TT frame. At that time instant, the 
sequence number of \textit{schedule-table} is updated and the 
frame is forwarded to the corresponding output. The sequence 
numbers of TT frames are used to restore the sequence 
numbers in \textit{sequence-table} for the cloned copies due to the 
uncertainties of the BE transmission, such as congestion and queuing. 
When the sequence number of the TT frame is larger than that in 
\textit{sequence-table}, it indicates that the cloned copies 
fall behind the TT frame due to congestion and queuing, and thus 
the sequence number in \textit{sequence-table} is updated so as
to drop those copies falling behind in Step 4, 
\textit{sequence checking}. Although multiple timers may be set, 
these operations are still conflict-free due to the TT scheduling 
that ensures different TT frames are not forwarded to 
the same port at the same time.

\textbf{Step 4: Sequence checking} is illustrated in 
Alg.~\ref{alg:sequencechecking}. It searches 
\textit{sequence-table} by \textit{flow-id} and checks 
if the current sequence number of the frame 
\textit{frame.sequence} is equal to the expected sequence number 
$row.sequence + 1$. If equal, the sequence number in 
\textit{sequence-table} is updated and the frame is passed to the 
corresponding output port, else the frame is dropped. 
Such a checking for sequence numbers ensures that cloned 
copies strictly follow the sequence number one-by-one, and thus 
is order-preserving. Any uncertainty such as surpassing or falling 
behind will lead to dropping of copies in the sequence checking. 
If such an uncertainty happens, the sequence numbers in 
\textit{sequence-table} are restored by Step 3, 
\textit{TT process}. TT copies sent as BE frames do not create any
conflict with TT frames due to the use of a guard band for TT transmission, 
and thus the updates of sequence numbers by this step and Step 3, 
\textit{TT process}, do not have any conflict.

\begin{algorithm}[t]
\KwIn{FIFO$<$Frame$>$[] TTcopiesbyport}
\KwOut{FIFO$<$Frame$>$[] framesbyport}
\For{i = 0; i $<$ TTcopiesbyport.num; i++}{ 
     \While{!TTcopiesbyport[i].empty()}{
         Frame frame = TTcopiesbyport[i].dequeue(); \\
         Row row = sequence-table[frame.flow-id]; \\
         \If{frame.sequence $==$ row.sequence + 1}{
              row.sequence = frame.sequence; \\
              framesbyport[i].enqueue(frame);
         }\Else{
              drop frame;
         }
     }
   }
\caption{sequence checking}
\label{alg:sequencechecking}
\end{algorithm}

\begin{algorithm}[t]
\KwIn{FIFO$<$Frame$>$[] framesbyport}
\KwOut{FIFO$<$Frame$>$[] deliveredframes}
\For{i = 0; i $\leq$ framesbyport.num; i++}{
     \While{!framesbyport[i].empty()}{
         Frame frame = framesbyport[i].dequeue();\\
         Row row = filter-table[frame.flow-id];\\
         \If{frame.sequence $>$ row.sequence}{
             row.sequence = frame.sequence; \\
             deliveredframes[i].enqueue(frame);
          } \Else {
             drop frame;
          }
     }
   }
\caption{Arrival Filtering}
\label{alg:arrivalfiltering}
\end{algorithm}

\textbf{Step 5: Arrival filtering} is illustrated in 
Alg.~\ref{alg:arrivalfiltering}. It searches \textit{filter-table} 
by \textit{flow-id} and checks the sequence number of the frame. 
If the sequence number is larger than that in 
\textit{filter-table}, the frame is delivered to the 
corresponding output port and updates the sequence number in 
\textit{filter-table}. Otherwise, the frame is dropped. 
This step performs filtering the later-arriving frames to ensure 
only the earlier piece, namely either a TT frame or its copy, 
is delivered. It is usually enabled at the latest switch that 
is directly connected to the end-device in a route. Depending on
demands, the data field \textit{iscopy} may be recovered to false 
so that the end-device treats it as a TT frame. If this step is 
disabled, the end-device may receive both a TT frame and its copy 
without the arrival filtering.

In addition, the processing steps, namely, Steps 3--5,
depend on the incremental sequence numbers. So, it is 
necessary to ensure the initial value of a sequence number in 
tables is equal to the minimal value like 0. When devices are 
initialized, or configurations are updated, the sequence number 
should be reset.

\subsection{Guaranteed Properties}

To improve the e2e latency of TT transmission via BE transmission, 
the underlying architecture should have the following four 
properties:
\begin{itemize}
    \item \textit{Preserving Order}: TT transmission is typically 
    for industrial control data whose sequence is rigidly 
    constrained by industrial requirements. The order of TT 
    frames is met by scheduling their transmission at precise 
    instants. So, the architecture should not alter the order of transmitting TT frames in spite of the uncertainty of 
    BE transmission.
    \item \textit{Self-Recovery}: The queuing and frame loss of 
    BE transmission due to congestion are usually short-lived 
    and partial. So, the architecture should recover the 
    transmission when the congestion disappears.
    \item \textit{Improvement}: The architecture should improve 
    the e2e latency of TT frames.
    \item \textit{Cost-Efficiency}: The overhead of the architecture should be low.
\end{itemize}

The proposed SWA provides four salient properties as follows.

\begin{property}[\textbf{Preserving Order of Transmitting TT Frames}]
Different TT frames in a flow from the origin device are 
received by the end-devices in the same order the origin
device sent.
\end{property}
\begin{proof}
Assuming the architecture is not order-preserving,
the order of transmitting TT frames is not the same
as the order of their reception. 
Let $f_{i}$ denote the TT frame with the sequence number $i$. 
All transmitted frames have a strictly increasing sequence 
numbers, i.e., $f_{i}$ is sent earlier than $f_{j}$ 
$\forall~i < j$. 
Due to the difference between the sent and the received order, 
whatever surpassing or falling behind, there exist two 
contiguously received frames, $f_{p}$ and $f_{q}$, $p > q$, but 
$f_{p}$is are received before $f_{q}$. Moreover, since $f_{p}$ 
and $f_{q}$ belong to the same flow with different sequence 
numbers, they travel on the same static route and data flow path.

If both $f_{p}$ and $f_{q}$ are copies, upon receipt of $f_{p}$ 
by an end-device, the sequence numbers of \textit{sequence-tables} 
on the data flow path are all updated to $p$. 
Checking $frame.sequence == row.sequence + 1$ will lead to 
dropping of $f_{q}$.

If $f_{p}$ and $f_{q}$ are both TT frames, after $f_{p}$ is 
received by an end-device, the sequence numbers of 
\textit{schedule-tables} along with the data flow path are all 
updated to $p$. Çhecking with $frame.sequence > row.sequence $ 
will lead to dropping of $f_{q}$.

If either $f_{p}$ or $f_{q}$ is a copy and the other is 
a TT frame, after an end-device receives $f_{p}$, 
the sequence number of \textit{filter table} in the data flow 
path is updated to $p$. Checking with $frame.sequence > 
row.sequence $ will lead to dropping of $f_{q}$.

In all of the above cases, $f_{q}$ will be dropped,
a contradiction since $f_{q}$ is received. So, the 
synergistic architecture is order-preserving.
\end{proof}

Although bandwidths are reserved at specific sending instants 
for TT frames via their scheduling and TT frames do not have conflict 
with BE frames under the guard-band strategy, they may still be 
dropped since the network environment is dynamic and complex 
such as link or switch failures. 
So, the condition $frame.sequence > row.sequence$ is checked
in \textit{TT process} and \textit{arrival filtering} to recover 
the transmission of TT frames whenever a new TT frame arrives. 
However, the \textit{sequence checking} step checks sequence numbers 
with $frame.sequence == row.sequence + 1$ because it is used for the 
transmission of copies of TT frames with the uncertainty of BE 
transmission. For example, considering the case when a copy 
with the sequence number $m$ is dropped at a switch $v$ due to 
congestion while the next copy with sequence number $m + 1$ 
overtakes the TT frame with sequence number $m$ and also arrives 
at $v$, if \textit{sequence checking} 
also uses $frame.sequence > row.sequence$, the copy with sequence number
$m + 1$ will be forwarded. As a result, the sequence number in 
\textit{filter-table} is updated to $m + 1$, and thus the TT frame 
with sequence number $m$ will be filtered out. So, checking 
$frame.sequence > row.sequence$ in \textit{sequence checking} 
will lead to dropping of TT frames, which is unexpected because 
BE transmissions are introduced to make TT transmissions as soon 
as possible, and thus should not hurt the transmission of 
TT frames.

Hence, we use the checking of $frame.sequence == row.sequence + 1$ 
in \textit{sequence checking}. But, it may drop all the following 
copies if a copy is dropped. To deal with such a situation and 
recover the transmission of the copies, we first 
establish the following lemma.
\begin{lemma}
\label{l1}
For a flow $f \in F$, $p=[[v_{0},v_{1}],\dots,[v_{n},v_{n+1}]]$ 
is a dataflow path of $f$, the constant $C$ is the minimum 
forwarding time in a single switch. 
If $f^{[v_{n},v_{n+1}]}.offset - f^{[v_{0},v_{1}]}.offset < 
f.period + n * C$, we have $f^{[v_{k},v_{k+1}]}.offset - 
f^{[v_{0},v_{1}]}.offset < f.period + k * C$ for all 
dataflow links, $[v_{k}, v_{k+1}] \in p, 0 \leq k \leq n$.
\end{lemma}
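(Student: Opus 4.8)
The plan is to exploit the per-hop monotonicity of the scheduled departure offsets and then telescope. For brevity, write $O_{k} := f^{[v_{k},v_{k+1}]}.offset$ for $0 \le k \le n$, so the hypothesis reads $O_{n} - O_{0} < f.period + n \cdot C$ and the goal is to show $O_{k} - O_{0} < f.period + k \cdot C$ for every $0 \le k \le n$.

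First I would establish the single-hop lower bound $O_{j} - O_{j-1} \ge C$ for each switch $v_{j}$, $1 \le j \le n$. The frame departs $v_{j-1}$ at time $O_{j-1}$ and, by the arrival-time relation of Section~III, reaches $v_{j}$ at $O_{j-1} + pdelay^{[v_{j-1},v_{j}]} + ldelay^{[v_{j-1},v_{j}]}$, which is at least $O_{j-1}$ since both delays are nonnegative. By definition $C$ is the minimum forwarding time in a switch, i.e.\ the minimum latency from the first bit received to the first bit sent, so the departure offset $O_{j}$ from $v_{j}$ exceeds the arrival time at $v_{j}$ by at least $C$. Chaining the two facts gives $O_{j} \ge O_{j-1} + C$.

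Next I would telescope. Summing $O_{j} - O_{j-1} \ge C$ over $j = k+1, \dots, n$ gives
\[
O_{n} - O_{k} = \sum_{j=k+1}^{n} \left( O_{j} - O_{j-1} \right) \ge (n-k) \cdot C ,
\]
so $O_{k} \le O_{n} - (n-k) \cdot C$. Subtracting $O_{0}$ and invoking the hypothesis $O_{n} - O_{0} < f.period + n \cdot C$ then yields
\[
O_{k} - O_{0} \le (O_{n} - O_{0}) - (n-k) \cdot C < f.period + n \cdot C - (n-k) \cdot C = f.period + k \cdot C ,
\]
which is precisely the claim. The strict inequality survives because the telescoped bound enters only as a weak lower bound subtracted from the strict hypothesis, and $f.period$ appears merely as a common additive constant.

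The main obstacle is the first step: pinning down $O_{j} - O_{j-1} \ge C$ from the stated definition of $C$. This requires tying the scheduled departure offset to the physical forwarding latency and observing that the intervening link and processing delays are nonnegative, so that a frame cannot depart a switch earlier than $C$ after it departed the previous one. Once this per-hop inequality is fixed, the telescoping and the final comparison are routine arithmetic.
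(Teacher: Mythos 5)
Your proof is correct and follows essentially the same route as the paper: both rest on the inequality $f^{[v_{n},v_{n+1}]}.offset - f^{[v_{k},v_{k+1}]}.offset \geq (n-k)\cdot C$ followed by the same arithmetic against the hypothesis. The only difference is that you derive that inequality explicitly by telescoping per-hop bounds $O_{j} - O_{j-1} \ge C$, whereas the paper asserts it in one step from the definition of $C$; your version is a slightly more careful justification of the same fact.
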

\begin{proof} Since $C$ is the minimum forwarding time in 
a single switch. We have \\ 
$ (n-k) * C \leq f^{[v_{n},v_{n+1}]}.offset - f^{[v_{k},v_{k+1}]}.offset$. \\
By $ f^{[v_{n},v_{n+1}]}.offset - f^{[v_{0},v_{1}]}.offset 
= (f^{[v_{n},v_{n+1}]}.offset - f^{[v_{k},v_{k+1}]}.offset) + 
(f^{[v_{k},v_{k+1}]}.offset- f^{[v_{0},v_{1}]}.offset)$, we have \\
$ (n-k) * C + (f^{[v_{k},v_{k+1}]}.offset- f^{[v_{0},v_{1}]}.offset) 
\leq f^{[v_{n},v_{n+1}]}.offset - f^{[v_{0},v_{1}]}.offset$. \\
Since $ f^{[v_{n},v_{n+1}]}.offset - f^{[v_{0},v_{1}]}.offset < f.period 
+ n * C$, we have \\
$(n-k) * C + (f^{[v_{k},v_{k+1}]}.offset- f^{[v_{0},v_{1}]}.offset) 
< f.period + n * C$,  and thus \\
$f^{[v_{k},v_{k+1}]}.offset - f^{[v_{0},v_{1}]}.offset < f.period + k * C$.
\end{proof}

the minimum forwarding time in a single switch, $C$, can be achieved 
by testing, or by its product specification. $\forall f \in F$, 
we constrain the e2e latency as:
\begin{equation}
\label{e1}
\begin{split}
&f^{[v_{n},v_{n+1}]}.offset - f^{[v_{0},v_{1}]}.offset < \\
&min\{period + n * C, latency\}.
\end{split}
\end{equation}
Finally, we use previous algorithms to schedule these flows. 
The dropped copies in those scheduled flows will not affect 
the transmission of the following copies. 
The formal self-recovery property is given as:
\begin{property}[\textbf{Self-Recovery}]
If a flow $ f \in F$ satisfies Eq.~(\ref{e1}), then 
it is self-recoverable, i.e., if its copy with sequence
number $m$ is dropped, then the transmission of the next 
copy with sequence number $m + 1$ will not be affected.  
\end{property}
\begin{proof}
Suppose the transmission of the next copy is affected by 
the dropping of the copy with sequence number $m$ and let $v_{i}$ 
denote the switch where the copy is dropped. Since the next 
copy is affected, it is dropped as a result of checking 
$frame.sequence == row.sequence + 1$ in \textit{sequence 
checking}. That is, the TT frame with $m$ arrives at $v_{i}$ 
not earlier than the next copy with $m + 1$, else the TT 
frame with $m$ will update the sequence number in 
\textit{sequence-table} to $m$, and thus the next copy will 
not be dropped. So, the latency of the TT frame with $m$ 
from $v_{0}$ to $v_{i}$ is not smaller than the sum of 
\textit{period} and the forwarding time of the next copy 
with $m + 1$ from $v_{0}$ to $v_{i}$. Thus, 
$f^{[v_{i},v_{i+1}]}.offset - f^{[v_{0},v_{1}]}.offset 
\geq f.period + i * C$. But, the flow $f$ satisfies 
\autoref{e1}. So, by Lemma \ref{l1}, 
$f^{[v_{i},v_{i+1}]}.offset - f^{[v_{0},v_{1}]}.offset < 
f.period + i * C$, a contradiction. Thus, the transmission 
of the next copy with $m + 1$ will not be affected.
\end{proof}

\begin{property}[\textbf{Dynamic Improvement}]
The synergistic architecture makes a dynamic improvement 
of the e2e latency of TT frames.
\end{property}
The synergistic architecture makes dynamic improvements 
in the following three aspects.
\begin{enumerate}
\item The copy of each TT frame is independently transmitted 
as soon as possible according to the current network load. 
So, different copies, even in the same flow, usually have 
different e2e latencies.

\item If a copy is dropped at switch $v_{i}$, the latency of 
the remaining path from $v_{i+1}$ to the end-device will be 
improved by a new copy starting at switch $v_{i+1}$ because 
SWA attempts to create a new copy at each 
switch on the flow's path.

\item The latency of TT frames is an upper bound of the e2e 
latency because the \textit{arrival filtering} selects the 
first arrived frame, namely, either a TT frame or its copy 
as the final delivery and ignores the others.
\end{enumerate}

\begin{property}[\textbf{Cost-Efficiency}]
For each TT frame, no more than one copy of the TT frame is 
transmitted simultaneously in the whole network.
\end{property}
\begin{proof}
Since the strict checking of $frame.sequence == row.sequence 
+ 1$ in \textit{sequence checking}, after a copy of a TT frame 
is transmitted, $row.sequence$ will be updated to 
$row.sequence = frame.sequence$ by 
Alg.~\ref{alg:sequencechecking}. As a result, the other copies 
of the same TT frame will all be dropped as a result of checking 
$frame.sequence == row.sequence + 1$. Since \textit{sequence 
checking} happens at each switch, although each switch tries to 
copy a TT frame, other copies will be dropped by the switches 
that generated them after one copy is transmitted. 
So, in the whole network, no more than one copy of the TT 
frame is transmitted simultaneously.
\end{proof}

The above four properties enable the synergistic architecture 
to opportunistically exploit BE transmissions to dynamically 
improve the e2e latency of TT frames because they handle 
the uncertainty of BE transmissions and make the architecture
cost-efficient.

\subsection{Extension for Controllable Jitter}
Although the SWA optimizes the e2e latency 
of TT frames, it increases the latency jitter of TT 
transmissions. In general, given a flow $f$ along with a 
dataflow path $p=[[v_{0},v_{1}],\ldots,[v_{n},v_{n+1}]]$, 
the maximum jitter based on SWA is 
$f^{[v_{n},v_{n+1}]}.\textit{arrival-end} - n * C$ in theory 
where $C$ is the minimum forwarding time in a single switch. 
However, the jitter of TT transmission is 
$f^{[v_{n},v_{n+1}]}.\textit{arrival-end} - 
f^{[v_{n},v_{n+1}]}.\textit{arrival-start}$. To limit the 
jitter, we add a new data field $jitter$ into the filter table 
\textit{filter table} and follow a simple holding strategy to 
control the jitter. That is, when the flow $f$ arrives before 
$f^{[v_{n}, v_{n+1}]}.offset - jitter$, the switch $v_{n}$ holds 
it until $f^{[v_{n}, v_{n+1}]}.offset - jitter$; 
else the switch delivers the flow immediately. Such a strategy
of holding frames controls jitter according to the 
application requirement by configuring the data field 
$jitter$. In general, the configurable range of the jitter of  
flow $f$ is $[0,f.period]$ because the jitter larger than 
the period will lead to no frame or multiple frames in a period, 
which is unacceptable for industrial real-time control applications
\cite{Carvajal2014}. So, to tolerate such unacceptable 
configurations, i.e., $jitter \notin [0,f.period]$, 
we assume that $jitter < 0$ is to minimize the jitter, 
i.e., dropping all copies and delivering TT frames only, 
and $jitter > f.period$ represents no constraint on jitter.

However, holding frames until the jitter falls in a configured 
range fails their original timely transmission. 
As a result, the held frames may conflict TT frames, 
copies, and other BE frames. Such conflicts should be properly 
handled if the extension for controlled jitter is implemented. 
First, the held frames are copies of TT frames rather than TT 
frames themselves since the e2e latencies of TT frames are the 
worst-case values, and thus TT frames need not be held. 
Second, since the held copies are delivered at the specified 
sending instants according to the configured jitter, which is 
akin to TT transmission, the conflict resolutions used by TT 
transmission can also be applied to handle the conflicts 
between the held copies and other BE frames. 
For simplicity, it is even not necessary to handle the conflicts 
between the held copies and other BE frames because if other BE 
frames are casually transmitted at the specified sending 
instants of the held copies, the transmission of copies simply waits 
sent until the transmission of BE frames is completed. In the worst 
case, a cloned copy of a TT frame will wait until the transmission
time instant of the TT frame. As a result, such a wait yield
a smaller jitter. Finally, we provide a resolution 
to the conflicts between the held copies and other copies, 
TT frames by constraining the configurable range of a jitter. 
We define a safe-jitter range such that if the jitter 
is configured in the range, there will be no conflict between 
the held copies and other copies, TT frames.
Theorem \ref{t1} provides the safe-jitter range of each TT flow.

\begin{theorem}[\textbf{Safe Jitter Range}]
Let $F$ be the set of all flows having the same output port
and $N$ be the number of flows in F. $\forall f_{i} \in F, 0 \leq 
i < N$, let $jitter_{i}$ denote the jitter configuration of 
$f_{i}$. $g_{ij}$ is the minimum gap of the departure time 
of $f_{j}$ before the departure time of $f_{i}$. Especially, 
when $j=i$, $g_{ii}$ is $f_{i}.period$, 
indicating that the minimum gap of two TT frames of $f_{i}$ 
is the period of $f_{i}$. $C_{i}$ is the transmission time of 
$f_{i}$ under a given bandwidth.\footnote{For example, given 
the bandwidth, 100 Mbps, the transmission time per bit is 
10 ns.} So, the safe range of $jitter_{i}$ is:
\begin{equation}
0 \leq jitter_{i} \leq min\{g_{ij} - C_{j} \ |\ 0 \leq j 
\leq N - 1\}. 
\end{equation}
\label{t1}
\end{theorem}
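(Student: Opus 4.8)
The plan is to treat the shared output port as a single serializing resource and to reduce the ``no conflict'' requirement to a family of pairwise non-overlap conditions on occupation windows. First I would fix shorthand: for a flow $f_j$ let $t_j$ denote a scheduled departure instant at this port (successive instants differ by $f_j.period$), so that a TT frame of $f_j$ occupies the port during $[t_j, t_j + C_j]$. Under the holding rule, a delivered copy of $f_i$ is released no earlier than $t_i - jitter_i$, hence occupies a length-$C_i$ window whose earliest possible start is exactly $t_i - jitter_i$ and whose latest possible finish is $t_i + C_i$. The structural fact I would lean on is that the underlying TT schedule is already conflict-free and that jitter only advances a frame (never delays it past its scheduled instant); consequently every frame, whether TT frame or copy, still finishes transmitting by $t + C$ at its scheduled instant $t$, while only copies can begin before that instant. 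A useful consequence is that the bound for $f_i$ turns out not to depend on the jitters chosen for the other flows, because I will always charge a preceding frame with its worst-case (unadvanced) finish $t_j + C_j$.

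With this in place, the new conflicts that advancing can create are exactly those between a copy of some $f_i$ pulled forward and an earlier frame $f_j$ that it might back into. For such a pair with $f_j$ departing before $f_i$, non-overlap is equivalent to requiring the copy of $f_i$ to start only after $f_j$ has finished, i.e. $t_i - jitter_i \geq t_j + C_j$, which rearranges to $jitter_i \leq (t_i - t_j) - C_j$. Taking the worst case over all periods and all instances in which $f_j$ precedes $f_i$ replaces $t_i - t_j$ by its minimum value $g_{ij}$, giving the binding per-pair constraint $jitter_i \leq g_{ij} - C_j$. The self-pair $j=i$ is handled identically using two consecutive periods of $f_i$, whose minimum separation is $g_{ii} = f_i.period$, so $jitter_i \leq f_i.period - C_i$.

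Since this must hold simultaneously for every flow that can precede $f_i$, I would intersect the constraints to obtain $jitter_i \leq \min\{g_{ij} - C_j \mid 0 \leq j \leq N-1\}$, while the lower bound $jitter_i \geq 0$ records that a copy is only ever advanced, never delayed. I would also observe that feasibility of the original schedule forces $g_{ij} \geq C_j$ for every preceding $f_j$, so each term $g_{ij} - C_j$ is nonnegative and the stated interval is nonempty. For tightness I would argue the converse: if $jitter_i$ exceeds $\min_j (g_{ij} - C_j)$, then for the minimizing index $j$ there is a schedule instance in which the advanced copy of $f_i$ starts strictly before $f_j$ finishes, producing an overlap; hence the range cannot be enlarged.

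The step I expect to be the main obstacle is the reduction itself, namely arguing cleanly that checking each flow only against the frames that \emph{precede} it, through the single quantity $g_{ij}$, already rules out all pairwise conflicts, including the seemingly separate case in which a later flow's copy is advanced toward $f_i$. The resolution I would emphasize is the asymmetry introduced by the holding rule: copies move only earlier and every frame still completes by its nominal finish $t + C$, so the ordered pair $(f_j, f_i)$ with $f_j$ earlier is always resolved by bounding the \emph{later} flow's jitter, and a further advance of $f_j$ itself can only vacate the port sooner. Applying the bound symmetrically across all flows therefore accounts for every ordered pair exactly once, and the minimum over $j$ correctly selects the tightest preceding frame while leaving non-adjacent frames as slack (non-binding) terms.
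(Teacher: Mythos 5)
Your proposal is correct and follows essentially the same route as the paper's proof: both reduce safety to the per-pair inequality $jitter_{i} \leq g_{ij} - C_{j}$, obtained by forcing the advanced copy of $f_{i}$ to start no earlier than the worst-case finish time $t_{j} + C_{j}$ of any preceding frame of $f_{j}$, and both use conflict-freedom of the TT schedule to get $g_{ij} \geq C_{j}$ so that the stated range is nonempty. The only difference is organizational: the paper treats copy-versus-TT-frame and copy-versus-copy conflicts as two separate cases (the latter by contradiction with the minimality of $g_{ij}$), whereas you unify them by charging every frame of $f_{j}$, held copy or not, with the same latest finish $t_{j} + C_{j}$.
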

\begin{proof}
First, TT frames are conflict-free as a result of scheduling
their transmission. So, given $\forall f_{i}, f_{j} \in F, 0 \leq 
i, j < N$, the transmission of $f_{j}$ can always finish 
before the departure time of $f_{i}$. That is, $g_{ij} - C_{j} 
\geq 0$. Therefore, $min\{g_{ij} - C_{j} \ |\ 0 \leq j 
\leq N - 1 \} \geq 0$.

Second, assuming $g_{ij} > f_{i}.period$, we set $g^{'}_{ij} = 
g_{ij} - f_{i}.period$. Thus, $g^{'}_{ij}$ is another gap of 
the departure time of $f_{j}$ prior to the departure time of 
$f_{i}$ and $g^{'}_{ij} < g_{ij}$, which is a contradiction 
since $g_{ij}$ is the minimum gap. So, $g_{ij} \leq 
f_{i}.period$. Especially, if $g_{ij}==f_{i}.period$ is 
established, then $f_{i}$ and $f_{j}$ are the same flow. 
Otherwise, $f_{i}$ and $f_{j}$ are different flows. 
Since $g_{ij}==f_{i}.period$, another frame of $f_{i}$ is 
transmitted at the gap $g_{ij}$ prior to the departure time 
of $f_{i}$. At the same time, a frame of $f_{j}$ is also 
transmitted at the gap $g_{ij}$ prior to the departure time 
of $f_{i}$ according to the definition of $g_{ij}$. 
So, a conflict between $f_{i}$ and $f_{j}$ occurs, which is a 
contradiction since TT frames are conflict-free. Furthermore, 
since $C_{j} > 0$, we have that $min\{g_{ij} - C_{j} \ |\ 0 
\leq j \leq N - 1 \} < f_{i}.period$.

Third, given $\forall f_{i}, f_{j} \in F, 0 \leq i, j < N$, 
since $0 \leq jitter_{i} \leq min\{g_{ij} - C_{j} \ |\ 0 \leq j 
\leq N - 1 \}$, we have that $0 \leq jitter_{i} 
\leq g_{ij} - C_{j}$. A held copy of $f_{i}$ will be 
transmitted at the gap $jitter_{i}$ prior to the departure 
time of flow $f_{i}$. So, the held copy will be transmitted 
after or at the gap $g_{ij} - C_{j}$ prior to the departure 
time of flow $f_{i}$. Thus, its transmission time does not 
overlap with that of TT frames of $f_{j}$. Hence, the held copy 
of $f_{i}$ has no conflict with TT frames of $f_{j}$. 
Since $f_{i}$ and $f_{j}$ are generic, the held copies have 
no conflicts with TT frames under $0 \leq jitter_{i} \leq 
min\{g_{ij} - C_{j} \ |\ 0 \leq j \leq N - 1 \}$.

\begin{algorithm}[t]
\KwIn{Flow fs[N], int C[N]}
\KwOut{int uppers[N]}
\For{i = 0; i $<$ N; i++}{
    uppers[i] = fs[i].period; \\
    \For{j = 0; j $<$ N; j++}{
        gap = fs[i].period; \\
        \For{k = 0; k $<$ $\frac{LCM(fs)}{fs[j].period}$; k++}{
            q = (k * fs[j].period + fs[j].offset) - fs[i].offset; \\
            \If{q $\geq$ 0}{
                m = $\lfloor \frac{q}{fs[i].period} \rfloor$;\\
                \If{(m + 1) * fs[i].period - q $<$ gap}{
                   gap = (m + 1) * fs[i].period - q;
                }
            } \Else {
                k = $\lfloor \frac{(fs[i].offset - fs[j].offset)}{fs[j].period} \rfloor$; \\
                \If{fs[i].offset - (k * fs[j].period + fs[j].offset) $<$ gap}{
                   gap = fs[i].offset - (k * fs[j].period + fs[j].offset);
                }
            }
        }
        \If{uppers[i] $>$ gap - C[j]}{
            uppers[i] = gap - C[j];
        }
    }
 }
\caption{Computation of safe Jitter}
\label{alg:jitter}
\end{algorithm}

Finally, given $\forall f_{i}, f_{j} \in F, 0 \leq i, j < N$, 
$tc_{ik}$ is a held copy of the TT frame of the $k$-th period 
of $f_{i}$ and $tc_{jl}$ is a held copy of the TT frame of the 
$l$-th period of $f_{j}$. Since the configurations, $jitter_{i}$ 
and $jitter_{j}$, the departure time of the held copy, $tc_{ik}$ 
is $k* f_{i}.period + f^{[v_{n}, v_{n+1}]}_{i}.offset - 
jitter_{i}$ while the departure time of the copy, $tc_{jl}$ is 
in the range from $l* f_{j}.period + f^{[v_{n}, 
v_{n+1}]}_{j}.offset - jitter_{j}$ to $l* f_{j}.period + 
f^{[v_{n}, v_{n+1}]}_{j}.offset$. Assuming $tc_{ik}$ and 
$tc_{jl}$ have conflicts, their transmission times overlap. 
So, the departure time of $tc_{ik}$ falls in the range from 
$l* f_{j}.period + f^{[v_{n}, v_{n+1}]}_{j}.offset - jitter_{j}$ 
to $l* f_{j}.period + f^{[v_{n}, v_{n+1}]}_{j}.offset + C_{j}$. 
That is, \\ $k* f_{i}.period + f^{[v_{n}, v_{n+1}]}_{i}.offset 
- jitter_{i} < l* f_{j}.period + f^{[v_{n}, 
v_{n+1}]}_{j}.offset + C_{j}$. 
Since $0 \leq jitter_{i} \leq min\{g_{ij} - C_{j} \ |\ 0 
\leq j \leq N - 1\}$, we have $jitter_{i} \leq g_{ij} - C_{j}$. 
Thus, 
$(k* f_{i}.period + f^{[v_{n}, v_{n+1}]}_{i}.offset) - 
(l* f_{j}.period + f^{[v_{n}, v_{n+1}]}_{j}.offset) < g_{ij}$. 
Setting $g^{'}_{ij}=(k* f_{i}.period + f^{[v_{n}, 
v_{n+1}]}_{i}.offset) - (l* f_{j}.period + f^{[v_{n}, 
v_{n+1}]}_{j}.offset)$, so $g^{'}_{ij}$ is another gap of the 
departure time of flow $f_{j}$ prior to the departure time of 
flow $f_{i}$ and $g^{'}_{ij} < g_{ij}$. According to the 
definition of $g_{ij}$, $g_{ij}$ is the minimum gap, which  
contradicts $g^{'}_{ij} < g_{ij}$. Hence, the held copy 
$tc_{ik}$ has no conflicts with the copy $tc_{jl}$. Since the 
generality of $tc_{ik}$ and $tc_{jl}$, the held copies will 
not conflict with other copies under $0 \leq jitter_{i} 
\leq min\{g_{ij} - C_{j} \ |\ 0 \leq j \leq N - 1\}$.

Given the jitter configuration $jitter_{i}$ of each flow 
$f_{i}$ satisfies $0 \leq jitter_{i} \leq min\{g_{ij} - C_{j} 
\ |\ 0 \leq j \leq N - 1\}$, the held copies have no conflicts 
with TT frames and other copies. 
\end{proof}

According to the proof of Theorem \ref{t1}, $jitter_{i}$ that 
satisfies $0 \leq jitter_{i} \leq min\{g_{ij} - C_{j} \ |\ 0 
\leq j \leq N - 1\}$ is also in $[0, f_{i}.period]$. 
So, Theorem \ref{t1} constrains the configuration range of 
jitter of each flow and ensures that the held copies have no 
conflicts with TT frames and other copies. Algorithm 
\ref{alg:jitter} treats the scheduled flows $fs$ and their 
transmission time $C$ as input and outputs the upper bound of 
the safe-jitter range of each flow, denoted by $uppers$, 
according to Theorem \ref{t1}. $N$ is the number of flows. 
$LCM(fs)$ denotes the least common multiple of these 
scheduled flows $fs$.

\begin{figure}[htbp]
  \centering
  \includegraphics[width=3.6in]{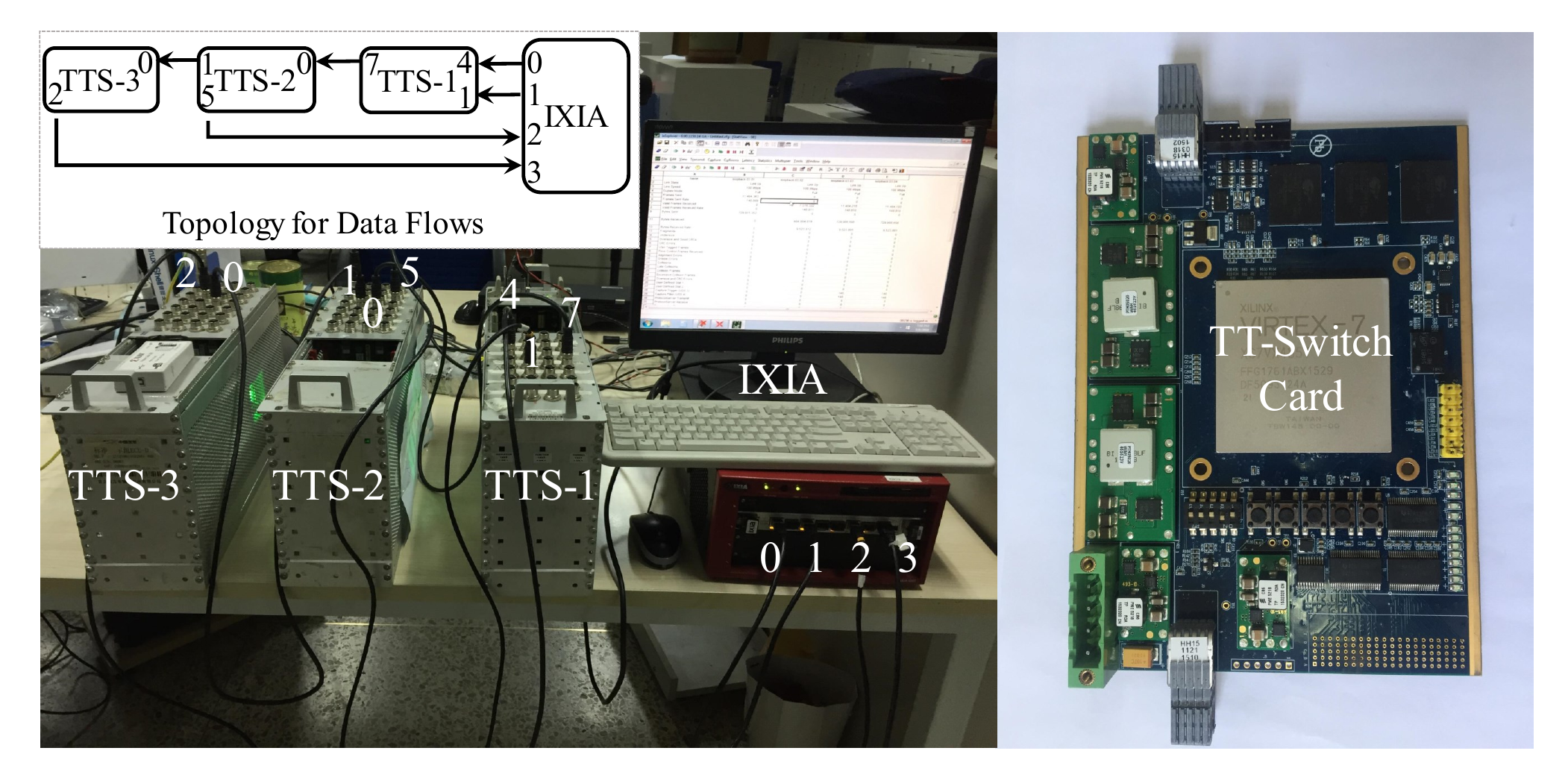}
  \caption{The experiment platform for the synergetic switch architecture. The \textit{IXIA} is a standard tester. The \textit{TTS-1}, \textit{TTS-2} and \textit{TTS-3} are our time-triggered (TT) switches. The \textit{TT-Switch Card} is our switch card installed in \textit{TTS-1}, \textit{TTS-2} and \textit{TTS-3}. The digitals are ports. The rays are links and constitute the topology for data flows.}
  \label{fig:platform}
\end{figure}

\section{Evaluation Results and Analysis}
To demonstrate the advantages of SWA, we compare it with the 
state-of-the-art TT transmission that uses the typical scheduler 
\cite{SteinerSMTScheduler} to schedule TT frames and uses 
FPGA-based TT switches \cite{Li2020Time} to transmit these 
frames. Based on the TT switches, we implement the proposed 
SWA in our TT switches illustrated 
in Fig.~\ref{fig:platform} with the Xilinx FPGA Virtex-7 
XC7VX485T to demonstrate the practicability 
of SWA. Our TT switches have $24$ fast-Ethernet 
(100 Mbps) ports. Fig.~\ref{fig:platform} shows the experiment 
platform including three TT switches and a standard tester. 
We designed four scenarios for comparison as follows.
\begin{itemize}
    \item \textbf{Scenario One (Latency Improvement)}: 
    We show that the copies of TT frames can always improve the e2e latency 
    of TT frames in spite of the disturbance of other BE data. So, we assign 
    copies a higher priority than that of the other BE data and compare the 
    e2e latency of SWA with that of the previous TT transmission.
    
    \item \textbf{Scenario Two (Upper Bounds)}: We demonstrate that the e2e 
    latency of TT frames is the upper bound of the proposed architecture 
    in the presence of the uncertainty of BE transmission. So, we assign 
    copies of TT frames the same priority as the other BE data. 
    We adjust the bandwidth of the other BE data to queue or even 
    drop copies, and compare the e2e latency of SWA with that of 
    the previous TT transmission.
    \item \textbf{Scenario Three (Self-recovery)}:
    We demonstrate SWA's self recovery by forcing BE transmission 
    congestion to happen. 
    In such a case, some copies of TT frames will be queued or even dropped 
    while others that are delivered faster than TT frames, thus improving 
    TT transmission according to the self-recovery property. 
    So, we assign copies of TT frames the same priority as the other BE data, 
    and adjust BE bandwidth to queue or even drop copies. Furthermore, 
    we let the queuing and loss only happen in some switches to 
    demonstrate that the other congestion-free switches can still improve 
    latency of TT transmission. We present the dynamic improvement in 
    self-recovery scenarios by comparing the e2e latency of SWA with that 
    of the previous TT transmission.
    \item \textbf{Scenario Four (Controllable Jitter)}: 
    We show the controlled jitter of our extended architecture by a 
    configurable jitter. So, based on Scenario Two, we configure the 
    jitter of each flow according to Theorem \ref{t1}, and compare the 
    e2e latency of SWA with that of the previous TT transmission.
\end{itemize}

\begin{table}
\centering
\caption{The basic information of TT flows, namely \textit{Flow ID}, \textit{Length} and \textit{Period}.}
\label{table:flows}
\begin{tabular}{|c|c|c|c|c|c|}
 \noalign{\smallskip}
 \hline
 \noalign{\smallskip}
 \noalign{\smallskip}
 \multicolumn{2}{|c|}{Flow ID = 1} & \multicolumn{2}{|c|}{Flow ID = 2} & \multicolumn{2}{|c|}{Flow ID = 3}\\
 \noalign{\smallskip}
 \hline
 \noalign{\smallskip}
  Length  & Period  &  Length  & Period  &  Length  & Period \\
 \noalign{\smallskip}
 (Bytes)  &   (ns)  &  (Bytes) &  (ns)   & (Bytes)  &  (ns)  \\
 \noalign{\smallskip}
 \hline
 \noalign{\smallskip}
  128   & 524288  &  256   & 1048576  &  512   & 2097152\\
 \noalign{\smallskip}
 \hline
 \noalign{\smallskip}
\end{tabular}
\end{table}

\begin{table}
\centering
\caption{The configuration of TT flows per switch. The \textit{Flow-id} 
which is the same as that in Table.~\ref{table:flows} is the same flow. 
The port numbers correspond to those in Fig..~\ref{fig:platform}.}
\label{table:switches}
\begin{tabular}{|c|c|c|c|c|c|c|}
 \noalign{\smallskip}
 \hline
 \noalign{\smallskip}
 \noalign{\smallskip}
 \multirow{2}{*}{Switch} & Flow & Input & Output & Arrival   & Arrival & Offset \\
 \noalign{\smallskip}
                         & ID   & Port  & Port   & Start(ns) & End(ns) &  (ns)   \\
 \noalign{\smallskip}
 \hline
 \noalign{\smallskip}
 \multirow{3}{*}{TTS-1} & 1  & 4 & 7 & 400 & 1400 & 22528 \\ 
 \noalign{\smallskip}
 \cline{2-7}
 \noalign{\smallskip}
                        & 2  & 4 & 7 & 29072 & 30072 & 61440 \\
 \noalign{\smallskip}
 \cline{2-7}
 \noalign{\smallskip}
                        & 3  & 4 & 7 & 67984 & 68984 & 120832 \\
 \noalign{\smallskip}
 \hline
 \noalign{\smallskip}
 \multirow{3}{*}{TTS-2} & 1  & 0 & 1 & 22928 & 23928 & 45056 \\ 
 \noalign{\smallskip}
 \cline{2-7}
 \noalign{\smallskip}
                        & 2  & 0 & 1 & 61840 & 62840 & 94208 \\
 \noalign{\smallskip}
 \cline{2-7}
 \noalign{\smallskip}
                        & 3  & 0 & 1 & 121232 & 122232 & 174080 \\
 \noalign{\smallskip}
 \hline
 \noalign{\smallskip}
 \multirow{3}{*}{TTS-3} & 1  & 0 & 2 & 45456 & 46456 & 67584 \\ 
 \noalign{\smallskip}
 \cline{2-7}
 \noalign{\smallskip}
                        & 2  & 0 & 2 & 94608 & 95608 & 126976 \\
 \noalign{\smallskip}
 \cline{2-7}
 \noalign{\smallskip}
                        & 3  & 0 & 2 & 174480 & 175480 & 227328 \\
 \noalign{\smallskip}
 \hline
 \noalign{\smallskip}
\end{tabular}
\end{table}

\begin{figure*}[htbp]
  \centering
  \includegraphics[width=7.2in]{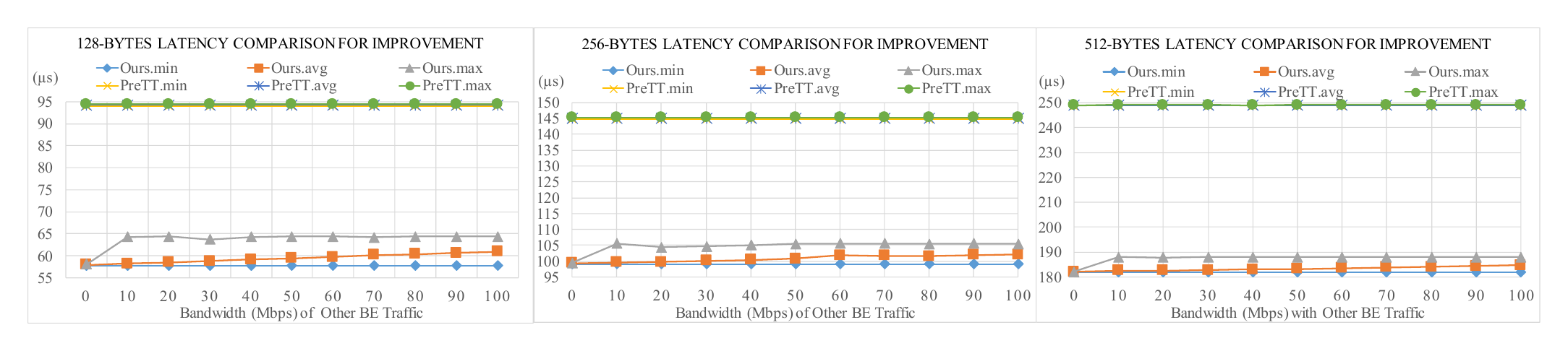}
  \caption{The e2e latency comparison of SWA and the previous TT 
  transmission, denoted by \textit{PreTT}, under Scenario One. 
  The copies of TT frames of different sizes in SWA have higher 
  priority than that of the other BE traffic which is broadcast 
  and increased by 10 Mbps as step size.}
  \label{fig:lowpriority}
\end{figure*}

\begin{figure*}[htbp]
  \centering
  \includegraphics[width=7.2in]{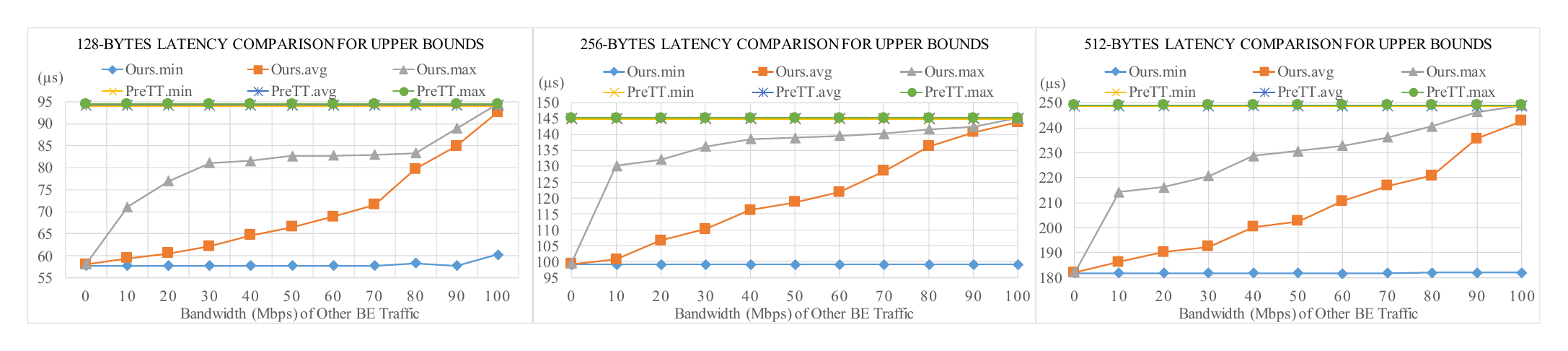}
  \caption{Comparison of e2e latencies of SWA and the previous TT 
  transmission, denoted by \textit{PreTT}, under Scenario Two. 
  The copies of TT frames of different sizes have the same priority 
  as that of the other BE traffic which is broadcast and 
  increased by 10 Mbps as step size.}
  \label{fig:samepriority}
\end{figure*}

\textbf{Test Setup:} Fig.~\ref{fig:platform} presents the experimental 
platform for the SWA. The \textit{IXIA} is a standard tester 
supporting IEEE 1588 protocol. \textit{TTS-1, TTS-2, TTS-3} 
are our TT switches. These switches establish time synchronization using 
a peer-to-peer transport clock strategy defined in the IEEE standard 1588. 
The numbers and arrowed lines highlight switch ports and links, 
respectively, which constitute the topology for data flows. 
We schedule three TT flows of different lengths and periods 
in the topology by the scheduler \cite{SteinerSMTScheduler}. 
Table~\ref{table:flows} illustrates the basic information of 
three flows: \textit{flow-id}, \textit{length}, and \textit{period}. 
Table~\ref{table:switches} provides the configuration of TT 
flows in each switches. Since Tables \textit{static-route-table}, 
\textit{sequence-table}, and \textit{filter-table} for copies 
of TT frames can be directly generated by Tables~\ref{table:flows} 
and \ref{table:switches} according to Fig.~\ref{fig:tables}, they are 
omitted here. 
The data field \textit{sequence} in these tables is initialized 
as 0 and updated based on the proposed SWA. 
All these TT flows are sent from the source, the port $0$ of \textit{IXIA} 
and are received from the sink, the port $3$ of \textit{IXIA}. 
The e2e latency of TT frames, defined as the latency from the first bit 
sent out by the source to the first bit received by the sink, is measured 
by the standard tester, \textit{IXIA}. 
The ports, $1$ and $2$ of \textit{IXIA} are used to send and receive 
other BE data as disturbance traffic, respectively, with the fixed 
length of 64 bytes and the bandwidth increased by 10 Mbps as step size. 
We test the e2e latency of TT frames with the previous TT transmission 
and SWA, respectively, under the four different scenarios as follows.

\textbf{Scenario One (Latency Improvement):} We assign the copies of TT 
frames higher priority than that of the other BE traffic, and broadcast 
the BE traffic via port 1 of \textit{IXIA}. First, we disable the functions 
of SWA by configuring the TT switches to test the e2e latency of TT frames 
with the previous TT transmission. Fig.~\ref{fig:lowpriority} plots the 
latency of TT frames of 128, 256 and 512 bytes. The minimum, average, and 
maximum latency --- \textit{PreTT.min}, \textit{PreTT.avg}, and 
\textit{PreTT.max}, respectively --- are three nearly coincident lines 
due to the high time-synchronized precision, $500 ns$ by IEEE 1588. 
The latency of TT frames of 128 bytes is about 94.25 $\mu$s and not 
affected by the broadcast of the BE traffic because of the guard band 
strategy according to IEEE 802.1 Qbv. Second, we enable the functions of 
SWA and test the e2e latency of TT frames again. Fig.~\ref{fig:lowpriority} 
illustrates the minimum, average, and maximum latency of SWA, namely, 
\textit{Ours.min}, \textit{Ours.avg}, and \textit{Ours.max}, respectively. 
Compared to the previous TT transmission, SWA makes a significant 
improvement in the latency of different frame lengths since the copies 
are forwarded as soon as possible to enhance the transmission of TT frames. 
For example, for the 128-bytes frames, the maximum latency of SWA is 
about 64.34 $\mu$s while the latency of the previous TT transmission is 
about 94.25 $\mu$s. Furthermore, the low-priority
BE traffic makes a small impact on the latency of SWA, namely, 
a 6.5 $\mu$s difference between the maximum and the minimum latency. 
Even when the bandwidth of the BE traffic reaches 100 Mbps, the 
high-priority copies are still delivered quickly to improve the latency 
of TT frames.

\begin{figure*}[htbp]
  \centering
  \includegraphics[width=7.2in]{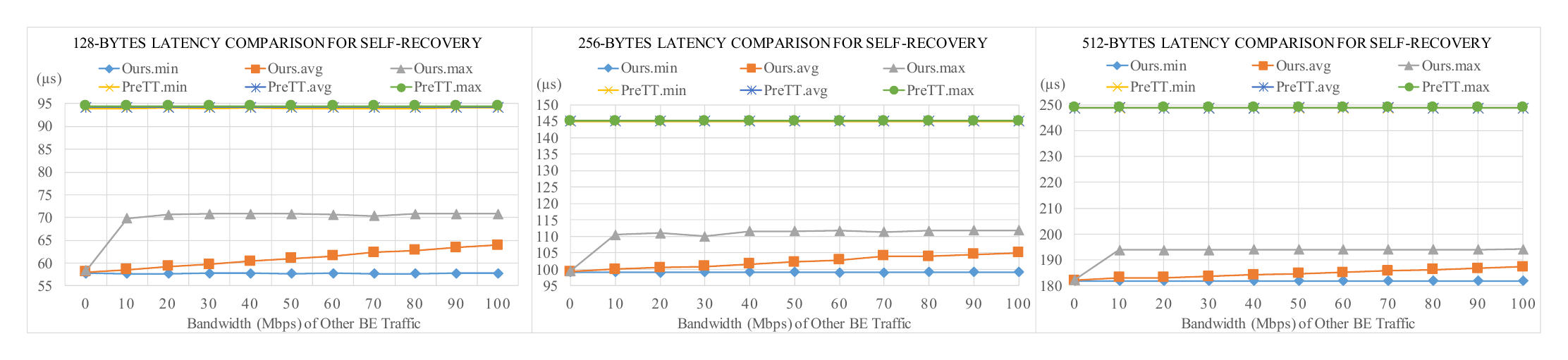}
  \caption{Comparison of e2e latencies of SWA and the previous TT 
  transmission, denoted by \textit{PreTT}, under Scenario Three. 
  The copies of TT frames of different sizes have the same priority as 
  that of the other BE traffic which is unicast and increased by 10 Mbps 
  as step size.}
  \label{fig:unicast}
\end{figure*}

\begin{figure*}[htbp]
  \centering
  \includegraphics[width=7.2in]{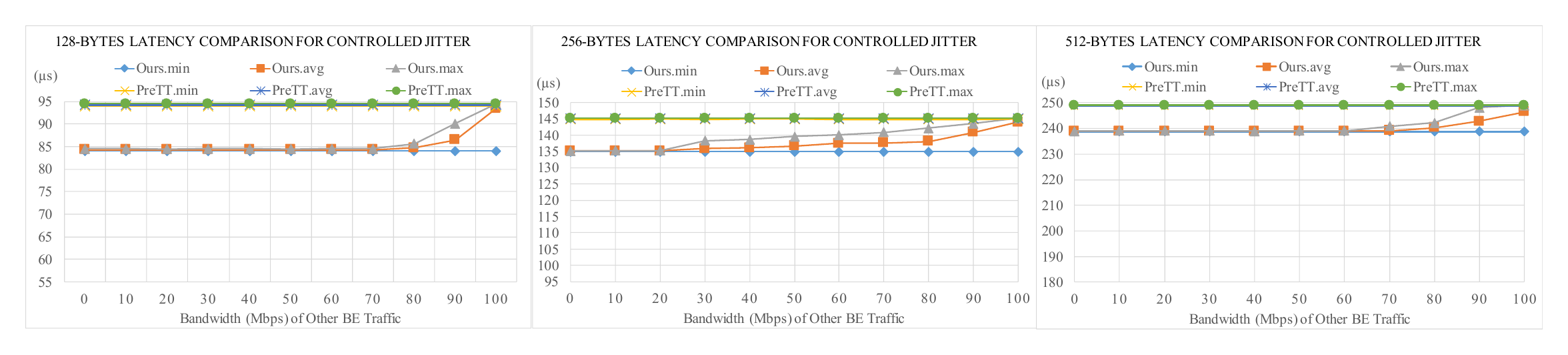}
  \caption{Comparison pf e2e latencies of SWA and the previous TT 
  transmission, denoted by \textit{PreTT}, under Scenario Four. 
  The copies of TT frames of different lengths have the same priority 
  as that of the other BE traffic which is broadcast and increased 
  by 10 Mbps as step size. These copies are transmitted with 
  configured jitters.}
  \label{fig:controledjitter}
\end{figure*}

\textbf{Scenario Two (Upper Bounds):} The difference from Scenario One is 
to broadcast the other BE traffic with the same priority as the copies 
of TT frames. First, we still disable the functions of SWA and test 
the e2e latency of TT frames with the previous TT transmission. 
As a result, Fig.~\ref{fig:samepriority} demonstrates the same latency 
of \textit{PreTT} as that in Fig.~\ref{fig:lowpriority} due to the 
scheduled precise sending instants and the guard band strategy. 
Second, we enable the proposed SWA and test the e2e latency of TT frames 
again. As illustrated in Fig.~\ref{fig:samepriority}, the latency 
changes from the minimum latency up to the latency of the previous 
TT transmission with different bandwidth of other BE traffic. 
So, the same priority BE traffic makes a bigger impact on the latency 
of SWA than that of the low-priority traffic in Scenario One. 
Furthermore, when the bandwidth of the BE traffic reaches 100 Mbps, 
the copies start to be dropped due to congestion and our latency 
reaches up to the latency of the previous TT transmission because SWA 
selects the first arrival as the final delivery. Hence, the latency of 
TT frames with the previous TT transmission is an upper bound of 
SWA's latency.

\textbf{Scenario Three (Self-recovery):} The difference from Scenario 
Two is to unicast the other BE traffic from port $1$ to port $2$ of 
\textit{IXIA}. As a result, the unicast BE traffic will disturb the 
transmission of copies of TT frames in the link from port $7$ of 
\textit{TTS-1} to port $0$ of \textit{TTS-2}, and not affect the link 
from port $1$ of \textit{TTS-2} to port $0$ of \textit{TTS-3}. 
First, we still disable the functions of SWA to test the e2e latency of 
TT frames with the previous TT transmission. Fig.~\ref{fig:unicast} 
demonstrates the same latency of \textit{PreTT} as that in 
Fig.~\ref{fig:lowpriority} and Fig.~\ref{fig:samepriority} because of 
the scheduled precise sending instants and the guard band strategy. 
Second, we enable the proposed SWA and test the e2e latency of TT frames 
again. As illustrated in Fig.~\ref{fig:unicast}, the latency of TT frames 
using SWA remains low with varying unicast BE bandwidth, which is a sharp 
contrast with the various latencies in Fig.~\ref{fig:samepriority}. 
This is because the BE traffic unicast only impacts the transmission of 
copies in \textit{TTS-1}. Even if copies are dropped in the presence
of 100 Mbps BE traffic, their transmission will recover in the next switch 
\textit{TTS-2} and continue improving the latency of the remaining path. 
So, the SWA makes a dynamic and opportunistic improvement of the latency 
of TT frames.

\textbf{Scenario Four (Controllable Jitter):} Besides the same scenario 
settings as Scenario Two, we configure the jitters of each flow to 
constrain the latency jitter of SWA. We compute a safe-jitter range for 
each flow according to Theorem \ref{t1}. 
As a result, $[0, 442496]ns$ for $flow\_id = 1$, $[0, 47232]ns$ for 
$flow\_id = 2$, and $[0, 77952]ns$ for $flow\_id = 3$. So, we set the 
jitter configuration, $jitter = 10 \mu$s for all flows and configure the 
switch \textit{TTS-3}. Fig.~\ref{fig:controledjitter} illustrates the 
latency of SWA with the configuration of the controlled jitter, i.e., 
$jitter = 10 \mu$s. Compared to the latency of SWA in Scenario Two 
illustrated in Fig.~\ref{fig:samepriority}, the lower bound of the 
latency is increased and the jitter of TT frames is controlled to be
within about 10 $\mu$s because the copies are held until the jitter is 
less than 10 $\mu$s. For example, for the TT frames of 128 bytes, the lower 
bound latency is increased to about 84.2 $\mu$s. When the bandwidth of 
disturbance traffic is below 70 Mbps, copies are held until their latency 
reaches the lower bound so that the jitter is no more than 10 $\mu$s. 
The jitter configuration constrains the lower latency bound to meet the 
jitter requirement, suggesting existence of a  tradeoff between jitter 
and latency. 
That is, the smaller the jitter, the higher the lower latency bound.

In addition, in all scenarios, we carefully compare the arrival order of 
TT frames of SWA with that of the previous TT transmission. 
The proposed SWA is order-perserving.
So, these scenarios demonstrate that the proposed SWA tackles the 
uncertainty of BE transmission and makes a dynamic improvement of the e2e 
latency of the previous TT transmission. Furthermore, to make a good 
improvement, it is preferable to assign the copies of TT frames higher 
priority than that of other BE traffic.

\section{Conclusion}
TT transmission prevents the delivery of TT frames as soon as possible 
due to the scheduled precise sending instants, while BE transmission may 
allow BE frames to be transmitted as soon as possible but may not satisfy 
the applications' e2e latency requirements. We have proposed a synergistic
switch architecture (SWA) by exploiting BE transmission to dynamically 
and opportunistically enhance TT transmission. Specifically, we have 
presented the processing steps, requisite table configuration, and 
algorithms of SWA. We have rigorously investigated the order-preservation, 
self-recovery, dynamic improvement, and cost-efficiency of the SWA. 
Furthermore, we have extended the architecture to support the configurable 
latency jitter by computing the safe jitter range for each TT flow. 
Finally, we have implemented the proposed SWA in commercial TT switches 
based on FPGAs and used four scenarios to demonstrate the SWA's capability 
of dealing with the uncertainty of BE transmission and dynamically 
improving the e2e latency of TT transmission.


%








\bibliographystyle{IEEEtran} \balance
\bibliography{reference.bib}
\end{document}